\documentclass[conference]{IEEEtran}

\usepackage{epsfig}
\usepackage{times}
\usepackage{float}
\usepackage{afterpage}
\usepackage{amsmath}
\usepackage{amstext}
\usepackage{amssymb,bm}
\usepackage{latexsym}
\usepackage{color}
\usepackage{graphicx}
\usepackage{amsmath}
\usepackage{amsthm}
\usepackage{graphicx}
\usepackage[center]{caption}
\usepackage{pstricks}
\usepackage{caption}
\usepackage{subcaption}
\usepackage{booktabs}
\usepackage{multicol}
\usepackage{lipsum}
\usepackage{wrapfig}

\usepackage{pgfplots}
\pgfplotsset{compat=newest}
\usetikzlibrary{plotmarks}
\usetikzlibrary{arrows.meta}
\usepgfplotslibrary{patchplots}
\usepackage{grffile}

\newtheorem{remark}{Remark}

\newtheorem{prop}{Proposition}
\newtheorem{lem}{Lemma}

\newtheorem{theorem}{Theorem}

\newcommand{\eu}{\mathrm{e}}

\newcommand{\mmse}{\mathrm{mmse}}

\newcommand{\E}{\mathbb{E}}

\newcommand{\X}{{\bf X}}
\newcommand{\x}{{\bf x}}
\newcommand{\Z}{{\bf Z}}
\newcommand{\z}{{\bf z}}
\newcommand{\Y}{{\bf Y}}
\newcommand{\y}{{\bf y}}
\newcommand{\I}{{\bf \mathsf{I}}}
\newcommand{\J}{{\bf J}}
\newcommand{\T}{{\bf T}}
\newcommand{\U}{{\bf U}}

\newcommand{\N}{{\bf N}}

\begin{document}
\title{An MMSE  Lower Bound via Poincar\'e Inequality}

\IEEEoverridecommandlockouts  

\author{%
\IEEEauthorblockN{Ian Zieder$^\dagger$,  Alex Dytso$^\dagger$, Martina Cardone$^{\star}$}
$^{\dagger}$ New Jersey Institute of Technology, Newark, NJ 07102, USA, Email: \{ihz3,alex.dytso\}@njit.edu\\
$^{\star}$ University of Minnesota, Minneapolis, MN 55404, USA, Email: mcardone@umn.edu
}


\maketitle

\begin{abstract}
This paper studies the minimum mean squared error (MMSE) of estimating $\X \in \mathbb{R}^d$ from the noisy observation $\Y \in \mathbb{R}^k$, under the assumption that the noise (i.e., $\Y|\X$) is a member of the exponential family.  
The paper provides a new lower bound on the MMSE. Towards this end, an alternative representation of the MMSE is first presented, which is argued to be useful in deriving closed-form expressions for the MMSE. This new representation is then used together with the Poincar\'e inequality to provide a new lower bound on the MMSE.
Unlike, for example, the Cram\'{e}r-Rao bound, the new bound holds for all possible distributions on the input $\X$. Moreover, the lower bound is shown to be tight in the high-noise regime for the Gaussian noise setting under the assumption that $\X$ is sub-Gaussian. Finally, several numerical examples are shown which demonstrate that the bound performs well in all noise regimes.  
\end{abstract}
	\section{Introduction}
	The minimum mean squared error (MMSE) is an essential and ubiquitous fidelity criterion in statistical signal processing. However, the MMSE is often difficult to compute in closed-form, and we often need to rely on bounds. In terms of bounds,  the attention typically falls on lower bounds as deriving a tight lower bound can often be a difficult task. 
	
In this work, we derive a novel lower bound on the MMSE  of estimating $\X \in \mathbb{R}^d$ from the noisy observation $\Y \in \mathbb{R}^k$. Towards this end, we present and study an alternative representation of the MMSE $\mmse(\X|\Y)$.  This new representation provides a new line of attack for direct computation of the MMSE and, together with the Poincar\'e inequality, allows us to derive a new lower bound on the MMSE.
%
%
The focus is on the exponential family, which is described next. The class of probability models $\mathcal{P}= \left\{ P_{\Y|\X=\x},\x \in \mathcal{X} \subseteq \mathbb{R}^d \right\}$ supported on $\y\in \mathcal{Y} \subseteq \mathbb{R}^k$ is an exponential family if the probability density function (pdf) of it can be written as
	\begin{align}
		f_{\Y|\X}(\y|\x) &= h(\y) {\rm{e}}^{\langle \x,\T(\y) \rangle - \phi(\x)}\label{eq:PDF_Exponential}, 
	\end{align}
	where $\T : \mathcal{Y} \rightarrow \mathbb{R}^d$ is the sufficient statistic function; $\phi : \mathcal{X} \rightarrow \mathbb{R}$ is the log-partition function; $h : \mathcal{Y} \rightarrow [0,\infty)$ is the base measure; and $\langle \cdot, \cdot \rangle$ denotes the inner product.
	
	\subsection{Related Work} 
	Generally, there are three approaches for finding lower bounds on the MMSE, which result in three different families.
	
The {\em first} family is known as  Weiss–Weinstein family~\cite{weinstein1988general}, and it includes important bounds such as the Bayesian Cram\'er-Rao bound~\cite{bell1968detection}   (also known as the Van Trees bound), the Bobrovsky–Zakai bound~\cite{bobrovsky1976lower}, the Barankin bound~\cite{barankin1949locally}, and the Bobrovsky-Mayer-Wolf-Zakai  bound~\cite{bobrovsky1987some}.   The Weiss–Weinstein family  relies on the Cauchy-Schwarz inequality, which establishes the following variational 
	representation of the MMSE,
	\begin{align*}
	\mmse(\X|\Y)= \sup_{\psi \in \mathcal{C} }   \frac{\|   \E[  \psi(\X,\Y) \X^{\mathsf{T}}  ]    \|  }{ \E[ \|\psi(\X,\Y) \|^2] },
	\end{align*} 
	where 
	\begin{align*}
	 \mathcal{C} = \{   \psi :& \mathcal{X} \times \mathcal{Y} \to \mathcal{X}:  \E[  \psi(\X,\Y)  | \Y={\bf y}]=0, {\bf  y} \in \mathcal{Y}, \notag\\
	 & \E[ \| \psi(\X,\Y) \|^2 ]<\infty  \}. 
	\end{align*} 
		The aforementioned lower bounds are then attained by a clever choice of the function $\psi$ that results in a computationally feasible bound.   One of the drawbacks of this family of bounds is that choosing the right $\psi$ can be challenging. In particular, to the best of our knowledge, all of the existing bounds require that the random vector $\X$ has a pdf; as such, these bounds do not, for example, hold for discrete or mixed random vectors.   
		
		The {\em second} family of lower bounds is known as Ziv-Zakai and it was originally proposed in~\cite{ziv1969some} and later improved in~\cite{seidman1970performance,chazan1975improved,bellini1974bounds}.  Ziv-Zakai bounds rely on connecting estimation to binary hypothesis testing. A simple form of the Ziv-Zakai bound can be stated as follows,
		\begin{align*}
		&\mmse(X|\Y) \notag\\
		& \ge   \frac{1}{2} \int_0^\infty  \!\! \int_{-\infty}^\infty  \left(  f_X(x) + f_X(x+h)  \right) P_e(x,x+h) h {\rm d}x {\rm d} h,
		\end{align*} 
		where $ P_e(x,x+h)$ is the optimal probability of error for the following binary hypothesis problem,
		\begin{align*}
		H_0:& \Y  \sim P_{\Y|X}({\bf y}| x), \\
		H_1:& \Y  \sim P_{\Y|X}({\bf y}| x+h), 
		\end{align*} 
		where 
		\begin{align*}
		{\mathsf{Pr}}(H_0)&= \frac{f_X(x) }{f_X(x) +f_X(x+h) },\\
		{\mathsf{Pr}}(H_1)&= 1-{\mathsf{Pr}}(H_0),
		\end{align*}
		with $f_X$ being the pdf of $X$.
			While this family of lower bounds is typically very tight, it suffers from several drawbacks. First, it can be difficult to compute in closed-form. Second, while there are vector generalizations of this bound~\cite{bell1997extended}, typically, these generalizations contain another layer of optimization, which can make the computation difficult. Third, these bounds assume that $X$ has a density and cannot be used to study the MMSE of discrete or mixed random variables.  
	
	The {\em third} family of lower bounds
uses a variational approach and it works by minimizing the
MSE subject to a constraint on a suitably chosen divergence
measure, for example, the Kullback–Leibler (KL) divergence~\cite{dytso2019mmse}.  Similar to the previous bounds,  also this family only holds if $X$ has a density and hence, it is not suitable for studying the MMSE of discrete or mixed random variables.  

The key ingredient in the proof of our new lower bound on the MMSE is the Poincar\'e inequality~\cite{raginsky2012concentration}.  The Poincar\'e inequality has found a number of applications in information theory and signal processing and the interested reader is referred to~\cite{fong2017proof,polyanskiy2013empirical,hachem2008new,bakry2008simple,sarwate2012impact,schlichting2019poincare,vidotto2020improved,chatterjee2009fluctuations} and references therein.
Recently, the authors of~\cite{aras2019family}  developed bounds on the MMSE using the log-Sobolev inequality, which has deep connections with the Poincar\'e inequality.


\subsection{Contributions and Outline} 
The contributions and the outline of the paper are as follows. 
The rest of this section presents
relevant notation. Section~\ref{sec:main_results} presents the new lower bound on the MMSE, 
and it shows its tightness in the high-noise regime for the Gaussian noise setting under the assumption that $\X$ is sub-Gaussian. Section~\ref{sec:main_results} also provides several numerical examples that suggest that the new lower bound is indeed tight in all noise regimes. Finally, Section~\ref{sec:NewReprMMSE} is dedicated to showing the proof of a new representation of the MMSE, that was used in the proof of the lower bound.  Section~\ref{sec:NewReprMMSE} also argues about the computation advantages of this new representation via an example.

	\subsection{Notation}
Deterministic scalar quantities are denoted by lowercase letters, scalar random variables are denoted by uppercase letters,  vectors are denoted by bold lowercase letters, random vectors by bold uppercase letters, and matrices by bold uppercase sans serif letters.   
The smallest eigenvalue of a matrix $\boldsymbol{\mathsf{A}}$  is denoted by $\lambda_{\min}(\boldsymbol{\mathsf{A}})$ and the smallest singular value is denoted by $\sigma_{\min}(\boldsymbol{\mathsf{A}})$;
${\sf{Tr}}(\boldsymbol{\mathsf{A}})$ is the trace of $\boldsymbol{\mathsf{A}}$;   the left inverse of  $\boldsymbol{\mathsf{A}}$ with full column rank is defined and denoted by $\boldsymbol{\mathsf{A}}^{+}= (\boldsymbol{\mathsf{A}}^{\mathsf{T}} \boldsymbol{\mathsf{A}} )^{-1} \boldsymbol{\mathsf{A}}^{\mathsf{T}}$.    $\I_k$ is the identity matrix of dimension $k$, and $\mathbf{0}_k$ is the column vector of dimension $k$ of all zeros.
The \emph{gradient} of a  function $f: \mathbb{R}^n \to \mathbb{R}$  is denoted~by 
 \begin{equation}
 \nabla_{\bf x} f(\mathbf{x})  = \begin{bmatrix}     \frac{ \partial  f({\bf x}) }{ \partial  x_1}  &\frac{  \partial   f({\bf x}) }{ \partial  x_2}   & \ldots &  \frac{ \partial   f({\bf x})}{ \partial  x_n}   \end{bmatrix}^\mathsf{T} \in \mathbb{R}^n .
 \end{equation}
 The \emph{Jacobian matrix}  of a function $ \mathbf{f}: \mathbb{R}^n \to \mathbb{R}^m$ is denoted by $\boldsymbol{\mathsf{J}}_\mathbf{x} \mathbf{f}(\mathbf{x}) \in \mathbb{R}^{n \times m}$ and defined as
 \begin{equation}
 \label{eq:Def_Jacobian}
 \boldsymbol{\mathsf{J}}_\mathbf{x} \mathbf{f}(\mathbf{x}) =  
 \begin{bmatrix}
 \nabla_{\bf x} {f}_1(\mathbf{x}) & \nabla_{\bf x} {f}_2(\mathbf{x}) & \ldots & \nabla_{\bf x} {f}_m(\mathbf{x})
 \end{bmatrix} .
 \end{equation}
The $\log$ function is the natural logarithm;  
$\langle \cdot, \cdot \rangle$ denotes the inner product; for a function $g$,  we let $\mathcal{L} \{g\}(\cdot)$ denote the Laplace transform of $g$.

		\section{ A New Lower Bound on the MMSE} \label{sec:main_results}
In this section, we derive a novel lower bound on the MMSE. Towards this end, we first show a new representation of the MMSE, the base of which is the following identity,
				\begin{align}
						\textsf{\J}_\y\T(\y)\E[\X|\Y=\y] &= \nabla_{\y}\log\frac{f_{\Y}(\y)}{h(\y)}\label{eq:TRE_Idenitty} ,  \y \in \mathcal{Y},			
					\end{align} 
where it is assumed that $P_{\Y|\X}$ belongs to the exponential family in~\eqref{eq:PDF_Exponential} with sufficient statistics $\mathbf{T}$ and base measure $h$.				
The identity in~\eqref{eq:TRE_Idenitty} is well-known in both statistic and information theory and it often goes under the name of  Tweedie's formula~\cite{robbins1956empirical,esposito1968relation}.
	The identity in~\eqref{eq:TRE_Idenitty} can be restated using the information density, which for the distribution $P_{\X,\Y}$  supported on $\mathcal{X} \times \mathcal{Y}$  is defined as
\begin{equation}
\label{eq:InfoDens}
\iota_{P_{\X,\Y}}( {\bf x};{\bf y})=\log \frac{{\rm d} P_{\X,\Y}}{{\rm d} ( P_{\X}  \cdot P_{\Y}) } ({\bf x}, {\bf y} ),\,    {\bf x} \in \mathcal{X}, \,  {\bf y}  \in\mathcal{Y},
\end{equation}
where $\frac{{\rm d} P_{\X,\Y}}{{\rm d} ( P_{\X}  \cdot P_{\Y}) } ({\bf x}, {\bf y} )= \frac{{\rm d} P_{\Y| \X ={\bf x}}}{{\rm d}  P_{\Y} } ({\bf x}, {\bf y} ) $ is the Radon-Nikodym derivative. Note that  if $P_{\Y| \X ={\bf x}}$ is not absolutely continuous with respect to $P_{\Y}$ we let $\iota_{P_{\X,\Y}}( {\bf x};{\bf y})=\infty$.

In order to derive our new lower bound on the MMSE, we will leverage the alternative expression for the MMSE given in the next theorem, the proof of which is provided in Section~\ref{sec:NewReprMMSE}.

	\begin{theorem} \label{thm:MMSE_new_rep} 
		 Assume that $P_{\Y|\X}$ has a pdf of the form in~\eqref{eq:PDF_Exponential} and that $ \textsf{\J}_\Y \T(\Y)$ has full column rank a.s. $\Y$.  Then,  
		\begin{equation*}
			\mmse(\X|\Y) =  \E \hspace{-0.02cm} \left[||{\left( \textsf{\J}_\Y \T(\Y) \right)}^{+} \nabla_{\Y}\iota_{P_{\X\Y}}(\X;\Y)||^2 \right]. 
		\end{equation*}
		\end{theorem}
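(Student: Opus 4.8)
The plan is to compute the gradient $\nabla_{\y}\iota_{P_{\X,\Y}}(\x;\y)$ in closed form for the exponential family, show that it factors through the Jacobian $\textsf{\J}_{\y}\T(\y)$, and then cancel that Jacobian with its left inverse to expose the estimation error $\x-\E[\X|\Y=\y]$.

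First I would write the information density explicitly. Since $P_{\Y|\X=\x}$ admits the density in~\eqref{eq:PDF_Exponential}, the Radon--Nikodym derivative in~\eqref{eq:InfoDens} equals $f_{\Y|\X}(\y|\x)/f_{\Y}(\y)$, so that
\[
\iota_{P_{\X,\Y}}(\x;\y)=\log h(\y)+\langle \x,\T(\y)\rangle-\phi(\x)-\log f_{\Y}(\y).
\]
Differentiating in $\y$ annihilates the $\phi(\x)$ term, and the Jacobian convention in~\eqref{eq:Def_Jacobian} gives $\nabla_{\y}\langle \x,\T(\y)\rangle=\textsf{\J}_{\y}\T(\y)\,\x$. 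Grouping the two log-terms then yields
\[
\nabla_{\y}\iota_{P_{\X,\Y}}(\x;\y)=\textsf{\J}_{\y}\T(\y)\,\x-\nabla_{\y}\log\frac{f_{\Y}(\y)}{h(\y)}.
\]

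Next I would invoke Tweedie's identity~\eqref{eq:TRE_Idenitty} to replace $\nabla_{\y}\log\frac{f_{\Y}(\y)}{h(\y)}$ by $\textsf{\J}_{\y}\T(\y)\,\E[\X|\Y=\y]$, producing the key factorization
\[
\nabla_{\y}\iota_{P_{\X,\Y}}(\x;\y)=\textsf{\J}_{\y}\T(\y)\big(\x-\E[\X|\Y=\y]\big).
\]
Because $\textsf{\J}_{\Y}\T(\Y)$ has full column rank a.s., its left inverse obeys $\left(\textsf{\J}_{\y}\T(\y)\right)^{+}\textsf{\J}_{\y}\T(\y)=\I_{d}$, so left-multiplying by $\left(\textsf{\J}_{\y}\T(\y)\right)^{+}$ collapses the right-hand side to $\x-\E[\X|\Y=\y]$. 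Squaring the norm and taking the expectation over $(\X,\Y)$ then gives $\E[\|\X-\E[\X|\Y]\|^{2}]=\mmse(\X|\Y)$, which is the assertion.

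The algebra is routine once the factorization is in place; the substance lies elsewhere. The full-column-rank hypothesis is exactly what licenses the cancellation $\left(\textsf{\J}_{\y}\T(\y)\right)^{+}\textsf{\J}_{\y}\T(\y)=\I_{d}$ (and implicitly requires $k\ge d$). I expect the main obstacle to be the regularity bookkeeping rather than the identities themselves: one must ensure that $h$ and $f_{\Y}$ are positive and differentiable and that $\T$ is differentiable on $\mathcal{Y}$, so that every gradient above exists a.s., and one must have the interchange of differentiation and integration defining $f_{\Y}$ valid so that the stated form of Tweedie's identity applies. Once these are granted, the chain of equalities is forced.
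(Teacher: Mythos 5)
Your proposal is correct and follows essentially the same route as the paper: the paper first proves your gradient factorization $\nabla_{\y}\iota_{P_{\X\Y}}(\x;\y)=\textsf{\J}_\y\T(\y)\left(\x-\E[\X|\Y=\y]\right)$ as a standalone proposition (using the exponential-family form, $\nabla_{\y}\langle\x,\T(\y)\rangle=\textsf{\J}_\y\T(\y)\x$, and Tweedie's identity~\eqref{eq:TRE_Idenitty}, exactly as you do), and then applies the left inverse, norm, and expectation in the same way. Your closing remarks on regularity (differentiability of $\T$, $h$, $f_\Y$, and the implicit requirement $k\ge d$) are sensible housekeeping that the paper leaves tacit.
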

The representation in Theorem~\ref{thm:MMSE_new_rep} was already derived for the Gaussian noise case in~\cite{dytso2021generalGauss}.  There exists another alternative representation of the MMSE for the exponential family and the interested reader is referred to~\cite{dytso2021general}.	
We also note that the classical representation of the MMSE requires the knowledge of the conditional probability $P_{\X|\Y}$ in order to compute $\E[\X|\Y]$.    Theorem~\ref{thm:MMSE_new_rep}, unlike the classical representation, requires only the knowledge of $f_{\Y|\X}$ (which is known) and of the marginal pdf $f_\Y$.  Thus, such an alternative representation is potentially easier to handle; in Section~\ref{sec:NewReprMMSE}, we will show an example of how such an alternative representation can lead to closed-form expressions for the MMSE.

The remaining of this section is dedicated to the derivation of a novel lower bound on the MMSE and its analysis. In particular, the main tool for obtaining our bound is the Poincar\'{e} inequality, which is formally discussed next.
	
	\subsection{Poincar\'{e} Inequality}\label{sec:PIinequatlity} 
Consider  a class of functions $\mathcal{A}$. We say that a probability distribution $P_\U$  satisfies  a Poincar\'{e} inequality with respect to $\mathcal{A}$ with a constant  $\kappa \ge 0$ if~\cite{raginsky2012concentration}
	\begin{align}
		{\rm Var}(f(\U)) \leq   \frac{1}{\kappa} \: \E \left[|| \nabla f(\U) ||^2 \right],  \, \forall  f\in \mathcal{A}. \label{eq:Plain_Poincare}
	\end{align}
	If $\kappa=0$, we treat the right-hand side of~\eqref{eq:Plain_Poincare} as infinity.  
	
We are here interested in the conditional version of the Poincar\'{e} inequality, i.e., for a class of functions $ \mathcal{A}$  we say that a conditional probability $P_{\Y|\X=\x}$ (for a fixed $\x \in \mathcal{X} $) satisfies  a Poincar\'{e} inequality in~\eqref{eq:Plain_Poincare} with respect to $\mathcal{A}$ with a constant  $\kappa(\x) \ge 0$  if
\begin{align*}
			{\rm Var}(f(\Y)|\X=\x) \leq \frac{1}{\kappa(\x)} \E\left[||\nabla f(\Y)||^2|\X=\x\right],   \, \forall  f\in \mathcal{A}.
		\end{align*}
		Since  $\x \in \mathcal{X} $ can be treated as a parameter of the distribution, the conditional and unconditional versions hold under the same conditions.  	
		There  exist several sufficient conditions on $\mathcal{A}$ and $P_{\Y|\X}$,  which guarantee that a Poincar\'{e} inequality holds, and which 
		identify the constant $\kappa(\x)$.  We next list a few of these.	
		\begin{itemize}
		\item  Convex Poincar\'{e}~\cite{boucheron2013concentration}. Let $P_{\Y|\X}$ be a product distribution and $\mathcal{A}$ be a set of functions such that $f \in \mathcal{A}$ is  $f:[0,1]^k \to \mathbb{R}$, separately convex and the partial derivatives of which exist. Then, $\kappa_{{\rm{C}}}(\x) =1$.

		\item  Bakry-\'{E}mery condition~\cite{bakry1985diffusions}:   Let  $\mathcal{A}$ be a class of continuously differentiable functions. Then,
		\begin{align}
		&\kappa_{{\rm{BE}}}(\x) \notag\\
		&= \max \! \left\{\! \kappa:  \!\nabla_{\bf y}^2 \log{\!\left(\frac{1}{f_{\Y|\X}(\y| \x)}\right)} \!\succeq \!\kappa  \I_k, \forall \y \!\in\! \mathcal{Y}  \right \}.   \label{eq:bakry emery}
	\end{align}
	If the set in~\eqref{eq:bakry emery} is empty, then we set $\kappa_{{\rm{BE}}}(\x)=0$. We note that the Bakry-\'{E}mery condition in~\eqref{eq:bakry emery} simply requires that the distribution is strongly log-concave. As an example, the Bakry-\'{E}mery constant for the exponential family is given by the next proposition.
	\begin{prop} \label{prop:variance constant} 
		  Assume that $P_{\Y|\X}$ has a pdf of the form in~\eqref{eq:PDF_Exponential}. Then, for $\x \in \mathcal{X},$ we have
		\begin{align*}
			&\kappa_{{\rm{BE}}}(\x) =\max \left\{ 0,   \tilde{\kappa}_{{\rm{BE}}}(\x)  \right\},
			\\& \tilde{\kappa}_{{\rm{BE}}}(\x) = \min_{\y\in\mathcal{Y}} \lambda_{\min} \left(\nabla_\y^2 \log\left(\frac{1}{h(\y)}\right) \!-\! \nabla_\y^2 \langle \x,\T(\y) \rangle\right),
		\end{align*} 
		where $\nabla_\y^2$ denotes the Hessian.
	 	\end{prop}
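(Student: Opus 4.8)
The plan is to compute directly the Hessian that appears inside the definition of $\kappa_{{\rm{BE}}}(\x)$ in~\eqref{eq:bakry emery} and then to read off its smallest eigenvalue, under the implicit regularity assumption that $h$ and $\T$ are twice differentiable in $\y$. First I would take the logarithm of the pdf in~\eqref{eq:PDF_Exponential} and negate it to obtain
\begin{equation*}
\log\frac{1}{f_{\Y|\X}(\y|\x)} = -\log h(\y) - \langle \x, \T(\y)\rangle + \phi(\x).
\end{equation*}
The key observation is that the log-partition term $\phi(\x)$ does not depend on $\y$ and hence vanishes under $\nabla_\y^2$. Applying the Hessian in $\y$ and using its linearity (so that $\nabla_\y^2 \langle \x, \T(\y)\rangle = \sum_i x_i \nabla_\y^2 T_i(\y)$) then yields
\begin{equation*}
\nabla_\y^2 \log\frac{1}{f_{\Y|\X}(\y|\x)} = \nabla_\y^2 \log\frac{1}{h(\y)} - \nabla_\y^2 \langle \x, \T(\y)\rangle,
\end{equation*}
which is precisely the symmetric matrix that appears inside $\lambda_{\min}(\cdot)$ in the statement.

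Next I would translate the positive semidefinite constraint into an eigenvalue condition. For a fixed symmetric matrix $\A$, the inequality $\A \succeq \kappa \I_k$ is equivalent to $\lambda_{\min}(\A) \ge \kappa$, so the largest $\kappa$ for which the constraint holds at a single point $\y$ is exactly $\lambda_{\min}(\A)$. Since the condition in~\eqref{eq:bakry emery} is required to hold simultaneously for \emph{all} $\y \in \mathcal{Y}$, the largest admissible $\kappa$ is the smallest of these eigenvalues over $\y$, namely $\tilde{\kappa}_{{\rm{BE}}}(\x) = \min_{\y\in\mathcal{Y}} \lambda_{\min}\left(\nabla_\y^2 \log\frac{1}{h(\y)} - \nabla_\y^2 \langle \x, \T(\y)\rangle\right)$.

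Finally I would reconcile this value with the nonnegativity built into the Poincar\'e framework, in which $\kappa \ge 0$. If $\tilde{\kappa}_{{\rm{BE}}}(\x) \ge 0$, then $\tilde{\kappa}_{{\rm{BE}}}(\x)$ is itself a valid nonnegative constant and $\kappa_{{\rm{BE}}}(\x) = \tilde{\kappa}_{{\rm{BE}}}(\x)$. If instead $\tilde{\kappa}_{{\rm{BE}}}(\x) < 0$, then even $\kappa = 0$ violates the semidefinite constraint at some $\y$, so the admissible set in~\eqref{eq:bakry emery} is empty and, by convention, $\kappa_{{\rm{BE}}}(\x) = 0$. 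Combining the two cases gives $\kappa_{{\rm{BE}}}(\x) = \max\{0, \tilde{\kappa}_{{\rm{BE}}}(\x)\}$, as claimed. I do not anticipate a genuine obstacle here, since the argument is a direct Hessian computation; the only point that requires care is the empty-set convention, because the raw eigenvalue minimization can return a negative value that must then be clipped to zero.
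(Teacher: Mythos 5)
Your proof is correct and takes essentially the same route as the paper: both negate the logarithm of the pdf in~\eqref{eq:PDF_Exponential}, observe that $\phi(\x)$ vanishes under $\nabla_\y^2$, and reduce the semidefinite constraint in~\eqref{eq:bakry emery} to the smallest eigenvalue of $\nabla_\y^2 \log\left(\frac{1}{h(\y)}\right) - \nabla_\y^2 \langle \x,\T(\y) \rangle$. The only difference is cosmetic: you spell out the minimization over $\y$ and the clipping at zero via the empty-set convention, steps the paper leaves implicit after its one-line Hessian computation.
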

		\begin{IEEEproof}
 	We have
 	\begin{align*}
 		&\nabla_\y^2 \log{\left(\frac{1}{f_{\Y|\X}(\y|\x)}\right)} \\ 
 		&= -\nabla_\y^2 \left(\log{(h(\y))} - \nabla_\y^2 ( \langle \x, \, \T(\y) \rangle ) - \phi(\x)\right) \\
 		&\succeq    \lambda_{\min} \left(\nabla_\y^2 \log\left(\frac{1}{h(\y)}\right) - \nabla_\y^2 \langle \x,\T(\y) \rangle\right)  \, \I_k,
 	\end{align*}
 which concludes the proof of Proposition~\ref{prop:variance constant}.
 	\end{IEEEproof}
	
	\item  Laplace distribution~\cite{bobkov1997poincare}:  Let   $P_{Y|X=x}$  have a Laplace pdf (i.e., $f_{Y|X}(y|x)= \frac{1}{2}\eu^{-|y-x|}$)  and $\mathcal{A}$ be a set of all functions $f:\mathbb{R} \to \mathbb{R}$ that are continuously differentiable and $\lim_{x \pm \infty} \eu^{-|x|} f(x)=0$. Then, $\kappa_{{\rm{Lap}}}(x)=\frac{1}{4}$. 
		\end{itemize}

	\subsection{A New Lower Bound on the MMSE} 
	We here leverage the result in Theorem~\ref{thm:MMSE_new_rep} to derive a new lower bound on the MMSE for the exponential family (i.e., we assume that $P_{\Y|\X}$ has a pdf of the form in~\eqref{eq:PDF_Exponential}). Our new lower bound on the MMSE is given by the next theorem.
	\begin{theorem} \label{thrm:MMSE_lower_bound}  Assume that the following three conditions hold: 
	\begin{enumerate}
	\item  For all $\x\in \mathcal{X}$ the distribution  $P_{\Y|\X=\x}$ has a pdf of the form in~\eqref{eq:PDF_Exponential} and it satisfies a Poincar\'{e} inequality with respect to $(\mathcal{A}, \kappa(\x))$;
	\item $\y  \mapsto \iota_{P_{\X\Y}}(\x;\y) \in \mathcal{A}$  for every  $\x$ such that $\kappa(\x)>0$;
	\item  There exists a $\rho\ge0$ such that for all $\y\in \mathcal{Y}$, $ \sigma_{\min}\left(\left( \textsf{\J}_\y \T(\y) \right)^{+}\right )  \ge \rho$. 
	\end{enumerate} 
	Then, 
		\begin{equation*}
			\mmse(\X|\Y)\ge \rho^2 \E \left[ \kappa(\X) {\rm Var} (\iota_{P_{\X\Y}}(\X;\Y) | \X) \right].
		\end{equation*}
	\end{theorem}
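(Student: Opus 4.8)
The plan is to begin from the alternative representation of the MMSE in Theorem~\ref{thm:MMSE_new_rep} and then successively invoke the two structural hypotheses: the uniform lower bound $\rho$ on the singular values of $\left( \textsf{\J}_\y \T(\y) \right)^{+}$ (condition~3), and the conditional Poincar\'e inequality (conditions~1 and~2). Writing
\[
\mmse(\X|\Y) = \E\left[ \| \left( \textsf{\J}_\Y \T(\Y) \right)^{+} \nabla_\Y \iota_{P_{\X\Y}}(\X;\Y) \|^2 \right],
\]
the goal is to lower bound the integrand, pointwise in $(\X,\Y)$, first by $\rho^2 \|\nabla_\Y \iota_{P_{\X\Y}}(\X;\Y)\|^2$ and then, after conditioning on $\X$, by a variance term via the Poincar\'e inequality.

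For the first reduction, fix $\y$ and set $\A = \textsf{\J}_\y \T(\y)$, which has full column rank a.s., so that $\A^{+}\A = \I_d$. The natural move is the smallest-singular-value inequality $\| \A^{+} \mathbf{v}\| \ge \sigma_{\min}(\A^{+}) \|\mathbf{v}\| \ge \rho \|\mathbf{v}\|$ applied with $\mathbf{v} = \nabla_\y \iota_{P_{\X\Y}}(\x;\y)$. \emph{The one point requiring care} is that, when $k>d$, the matrix $\A^{+}$ is rectangular with nontrivial kernel, so this inequality fails for a generic $\mathbf{v}$ and is valid only when $\mathbf{v}$ lies in the row space of $\A^{+}$. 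This is exactly what the exponential-family structure guarantees: differentiating the log-density in~\eqref{eq:PDF_Exponential} and substituting Tweedie's formula~\eqref{eq:TRE_Idenitty} (the computation underlying Theorem~\ref{thm:MMSE_new_rep}) yields
\[
\nabla_\y \iota_{P_{\X\Y}}(\x;\y) = \A \left( \x - \E[\X|\Y=\y] \right),
\]
so the gradient lies in the column space of $\A$, which coincides with the row space of $\A^{+}$; on this subspace $\A^{+}$ is injective with smallest singular value $\sigma_{\min}(\A^{+}) \ge \rho$. Taking expectations then gives $\mmse(\X|\Y) \ge \rho^2\, \E[\|\nabla_\Y \iota_{P_{\X\Y}}(\X;\Y)\|^2]$.

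The second reduction conditions on $\X$ and applies the Poincar\'e inequality. By the tower property, $\E[\|\nabla_\Y \iota_{P_{\X\Y}}(\X;\Y)\|^2] = \E\big[\, \E[\|\nabla_\Y \iota_{P_{\X\Y}}(\X;\Y)\|^2 \mid \X] \,\big]$. For each $\x$ with $\kappa(\x)>0$, condition~2 guarantees that $\y \mapsto \iota_{P_{\X\Y}}(\x;\y) \in \mathcal{A}$, so the conditional Poincar\'e inequality for $P_{\Y|\X=\x}$ applied to this function reads
\[
\E\left[ \|\nabla_\Y \iota_{P_{\X\Y}}(\x;\Y)\|^2 \mid \X=\x \right] \ge \kappa(\x)\, {\rm Var}\left( \iota_{P_{\X\Y}}(\x;\Y) \mid \X=\x \right).
\]
For $\x$ with $\kappa(\x)=0$ the inequality holds trivially, since its right-hand side vanishes. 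Taking the outer expectation over $\X$ and chaining with the previous bound yields $\mmse(\X|\Y) \ge \rho^2\, \E[\kappa(\X)\, {\rm Var}(\iota_{P_{\X\Y}}(\X;\Y) \mid \X)]$, as claimed.

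I expect the \emph{main obstacle} to be purely the linear-algebraic step above: once Theorem~\ref{thm:MMSE_new_rep} is in hand the argument is short, and the only genuinely delicate issue is justifying the smallest-singular-value bound for a rank-deficient rectangular pseudo-inverse, which hinges on the gradient sitting in the correct subspace. Everything else --- the tower property and the direct substitution into the conditional Poincar\'e inequality --- is routine, with the $\kappa(\x)=0$ case handled by the convention that makes both sides of the target inequality zero there.
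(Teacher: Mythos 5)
Your proposal is correct and takes essentially the same route as the paper's own proof: the representation from Theorem~\ref{thm:MMSE_new_rep}, then the smallest-singular-value bound from condition 3), then the tower property followed by the conditional Poincar\'e inequality from conditions 1) and 2), with the $\kappa(\x)=0$ case handled trivially in both. The one place you go beyond the paper is the pseudo-inverse step: the paper simply invokes $\| \boldsymbol{\mathsf{A}} \x \| \ge \sigma_{\min}(\boldsymbol{\mathsf{A}}) \|\x\|$, whereas you correctly note that for $k>d$ the matrix $\left( \textsf{\J}_\y \T(\y) \right)^{+}$ is a fat rectangular matrix with nontrivial kernel, so this inequality needs the argument to lie in its row space, and you supply the missing justification via Proposition~\ref{prop:Error_Info_density}, which places $\nabla_\y \iota_{P_{\X\Y}}(\x;\y)$ in the column space of $\textsf{\J}_\y \T(\y)$ --- a genuine and worthwhile tightening of the paper's step $\rm{(b)}$.
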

	\begin{IEEEproof} We have
	\begin{align*}
	\mmse(\X|\Y) &= \E\left[|| \X - \E[\X|\Y] ||^2\right]\\
		& \stackrel{{\rm{(a)}}}{=}\E \hspace{-0.02cm} \left[||{\left( \textsf{\J}_\Y \T(\Y) \right)}^{+} \nabla_{\Y}\iota_{P_{\X\Y}}(\X;\Y)]||^2 \right] \\
		& \stackrel{{\rm{(b)}}}{\ge} \rho^2  \E \hspace{-0.02cm} \left[   \|  \nabla_{\Y}\iota_{P_{\X\Y}}(\X;\Y)] \|^2 \right] \\
		&= \rho^2\E \left[ \E \left[ ||\nabla_{\Y} \iota_{P_{\X\Y}}(\X;\Y)||^2 | \X \right] \right]  \\
		&\stackrel{{\rm{(c)}}}{\ge} \rho^2 \E \left[ \kappa(\X) {\rm Var} (\iota_{P_{\X\Y}}(\X;\Y) | \X)  \right],
	\end{align*}
where the labeled (in)equalities follow from:
$\rm{(a)}$ applying Theorem~\ref{thm:MMSE_new_rep};
$\rm{(b)}$	using condition 3) in Theorem~\ref{thrm:MMSE_lower_bound} and the inequality $ \| \boldsymbol{\mathsf{A}} \x \|   \ge  \sigma_{\min}(\boldsymbol{\mathsf{A}}) \|\x\| $;
and $\rm{(c)}$  using a Poincar\'{e} inequality and conditions 1) and 2) in Theorem~\ref{thrm:MMSE_lower_bound}. This concludes the proof of Theorem~\ref{thrm:MMSE_lower_bound}.
	\end{IEEEproof}
	\begin{remark}
Theorem~\ref{thrm:MMSE_lower_bound} holds provided that three conditions are satisfied. Conditions 1) and 2) are required for the application of a Poincar\'{e} inequality in the proof of the bound.  Specifically,  in Section~\ref{sec:PIinequatlity}, we have listed a number of sufficient conditions for 1) to hold.   
Condition 2) requires that the information density $\y  \mapsto \iota_{P_{\X\Y}}(\x;\y) $  belongs to some regular enough family of functions $ \mathcal{A}$. Interestingly, such conditions are not difficult to find. Moreover, often these conditions only depend on $P_{\Y|\X}$ and are independent of $P_{\X}$. For example, the information density for the exponential family in~\eqref{eq:PDF_Exponential} is known to be infinitely differentiable for all distributions on $P_{\X}$~\cite{dytso2021general}. 
Finally, condition 3) imposes a requirement on the sufficient statistics $\T(\y)$. This condition, for example, holds when $\T(\y)$ is a linear function (e.g., Gaussian, Wishart). 
	\end{remark}
	
	\subsection{Tightness in the High-Noise Regime}
	We here show an example of $P_{\Y|\X}$ for which our lower bound in Theorem~\ref{thrm:MMSE_lower_bound} is tight in the high-noise regime. Towards this end, we consider a scenario where 
	\begin{equation}
	\label{eq:modelTight}
	\Y=\X + \N,
	\end{equation} 
	where $\X$ and $\N$ are independent and $\N \sim \mathcal{N}({\mathbf{0}}_k,\sigma_N^2 \I_k)$.  In this case, the lower bound in Theorem~\ref{thrm:MMSE_lower_bound} reduces to 
	\begin{equation}
	\mmse(\X|\Y) \ge   \sigma_N^2  \E \left[  {\rm Var} ( \iota(\X;\Y) |\X) \right].\label{eq:Our_bound_gaussian_noise}
	\end{equation} 
	It is noted that \eqref{eq:Our_bound_gaussian_noise} is a new representation of the MMSE. Conditions 1)-3) are verified as follows. First, we note that for the model in~\eqref{eq:modelTight}, we have that $\T(\y)=\y / \sigma_N^2$, which implies $\rho=\sigma_N^2$  in condition 3).
To verify conditions 1) and 2), we use the  Bakry-\'{E}mery condition  presented in Section~\ref{sec:PIinequatlity}, which first requires that $\y  \mapsto \iota_{P_{\X\Y}}(\x;\y) $  is continuously differentiable for every $\x$,  which is a well-known fact, see for example \cite{dytso2021generalGauss}. Second, since $h(\y)=\frac{1}{(2 \pi \sigma_N^2)^{k/2}} \eu^{-\frac{ \| \y\|^2}{2 \sigma_N^2 }} $, we can find the Bakry-\'{E}mery constant by applying Proposition~\ref{prop:variance constant}, namely $\kappa(\x)= \kappa_{{\rm{BE}}}(\x)=1/\sigma_N^2$.
The quantity $ \E \left[  {\rm Var} ( \iota(\X;\Y) |\X) \right]$ has appeared in the past in~\cite{polyanskiy2010channel}, where it was termed as conditional information variance.
	
	We now argue that the bound in~\eqref{eq:Our_bound_gaussian_noise} is tight in the high-noise regime.  To do this, we recall the following  high-noise behavior of the MMSE in the Gaussian noise setting~\cite{GuoMMSEprop} 
\begin{equation*}
\lim_{\sigma_N \to \infty} \mmse(\X|\Y)={\rm Var} \left(  \X    \right). 
\end{equation*}
At this point, it is interesting to point out that for the Gaussian noise setting, the Cram\'{e}r-Rao bound is given by~\cite{bell1968detection}
\begin{equation*}
\mmse(\X|\Y) \ge \frac{k^2 \sigma_N^2}{k +\sigma_N^2J(\X)},
\end{equation*} 
where $J(\X)$ is the Fisher information of $\X$. Hence, we obtain
\begin{equation*}
\lim_{\sigma_N \to \infty}  \frac{k^2 \sigma_N^2}{k +\sigma_N^2J(\X)}= \frac{k^2}{J(\X)},
\end{equation*} 
which is equal to the variance if and only if $\X$ is isotropic Gaussian~\cite{dytso2021generalGauss}. Thus, the Cram\'{e}r-Rao bound is only tight for the class of isotropic Gaussian inputs in the high-noise regime, and otherwise is sub-optimal.

The next result shows that the lower bound in Theorem~\ref{thrm:MMSE_lower_bound} is tight for a large family of prior distributions on $\X$. 
\begin{theorem}  
\label{thm:HighNoise}
Assume that $\X$ is sub-Gaussian\footnote{A random variable $X \in \mathbb{R}$ is said to be sub-Gaussian with parameter $\sigma_0$ if for every $\lambda \in \mathbb{R}$ we have that $\mathbb{E} \left [{\rm{e}}^{\lambda [X - \mathbb{E}[X]]}  \right ] \leq {\rm{e}}^{\lambda^2 \sigma_0 /2}$. A random vector $\X \in \mathbb{R}^k$ is said to be sub-Gaussian with parameter $\sigma_0$ if $\mathbf{u}^{\mathsf{T}} \left( \mathbf{X} - \mathbb{E}[\mathbf{X}] \right )$ is sub-Gaussian with parameter $\sigma_0$ for any unit vector $\mathbf{u}$.}. Then, 
\begin{equation*}
\lim_{\sigma_N \to \infty}  \sigma_N^2  \E \left[  {\rm Var} ( \iota(\X;\Y) |\X) \right]= {\rm Var} \left(  \X    \right). 
\end{equation*} 
\end{theorem}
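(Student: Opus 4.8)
The plan is to establish the limit by a squeeze argument that combines the fact that the quantity in question is itself the bound of Theorem~\ref{thrm:MMSE_lower_bound} with a matching asymptotic lower bound obtained from a single well-chosen test function. For the easy direction, specializing Theorem~\ref{thrm:MMSE_lower_bound} to the Gaussian model is exactly~\eqref{eq:Our_bound_gaussian_noise}, i.e. $\sigma_N^2\,\E[{\rm Var}(\iota(\X;\Y)\mid\X)]\le\mmse(\X|\Y)$. Invoking the high-noise limit $\lim_{\sigma_N\to\infty}\mmse(\X|\Y)={\rm Var}(\X)$ then gives $\limsup_{\sigma_N\to\infty}\sigma_N^2\,\E[{\rm Var}(\iota(\X;\Y)\mid\X)]\le{\rm Var}(\X)$, so it only remains to prove the reverse inequality for the $\liminf$.

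For the lower bound I would fix $\x$ and bound the conditional variance from below using the Cauchy--Schwarz inequality ${\rm Var}(U)\ge {\rm Cov}(U,V)^2/{\rm Var}(V)$ with the affine test function $\psi(\y)=\langle \y,\x-\E[\X]\rangle$. Conditioned on $\X=\x$ one has $\Y=\x+\N$ with $\N\sim\mathcal{N}(\mathbf{0}_k,\sigma_N^2\I_k)$, so ${\rm Var}(\psi(\Y)\mid\X=\x)=\sigma_N^2\|\x-\E[\X]\|^2$. The covariance is then evaluated by Gaussian integration by parts (Stein's identity): since $\psi(\x+\n)$ is affine in $\n$ with gradient $\x-\E[\X]$, the identity reduces it to $\sigma_N^2\,\E_\N[\langle \nabla_\y\iota(\x;\y)\big|_{\y=\x+\N},\,\x-\E[\X]\rangle]$. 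Here I would use the Gaussian instance of Tweedie's formula~\eqref{eq:TRE_Idenitty}, namely $\nabla_\y\iota(\x;\y)=(\x-\E[\X|\Y=\y])/\sigma_N^2$, to get the clean expression ${\rm Cov}(\iota(\x;\Y),\psi(\Y)\mid\X=\x)=\langle \x-\E_\N[\E[\X|\Y=\x+\N]],\,\x-\E[\X]\rangle$. Assembling these yields, for every $\x\ne\E[\X]$,
\begin{equation*}
\sigma_N^2\,{\rm Var}(\iota(\x;\Y)\mid\X=\x)\ge \frac{\langle \x-\E_\N[\E[\X|\Y=\x+\N]],\,\x-\E[\X]\rangle^2}{\|\x-\E[\X]\|^2}.
\end{equation*}

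The final step is to let $\sigma_N\to\infty$. As the noise grows the likelihood flattens and the posterior $P_{\X|\Y=\x+\N}$ converges to the prior, so $\E[\X|\Y=\x+\N]\to\E[\X]$; under the sub-Gaussian hypothesis a dominated-convergence argument upgrades this to $\E_\N[\E[\X|\Y=\x+\N]]\to\E[\X]$, whence the right-hand side above tends to $\|\x-\E[\X]\|^2$ for each $\x$ (the excluded set $\{\x=\E[\X]\}$ contributes $0=\|\x-\E[\X]\|^2$ and is harmless). Applying Fatou's lemma to the outer expectation over $\X$ gives $\liminf_{\sigma_N\to\infty}\sigma_N^2\,\E[{\rm Var}(\iota(\X;\Y)\mid\X)]\ge\E[\|\X-\E[\X]\|^2]={\rm Var}(\X)$, which together with the $\limsup$ bound proves the claim.

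I expect the main obstacle to be the interchange of limits in the last paragraph: justifying both the Gaussian integration-by-parts step (which needs $\iota(\x;\cdot)$ smooth with controlled growth against the Gaussian weight) and the convergence $\E_\N[\E[\X|\Y=\x+\N]]\to\E[\X]$ (a statement about posterior means in the vanishing-signal regime). This is precisely where sub-Gaussianity of $\X$ enters: it guarantees that the relevant moment generating functions, and hence $f_\Y$, $\iota$ and their gradients, are finite and smooth, and it supplies the uniform integrability needed to pass the limit inside both the noise-average and the prior-average (Fatou). Everything else — the Cauchy--Schwarz variance bound, Stein's identity for the affine $\psi$, and the specialization of Tweedie's formula — is routine.
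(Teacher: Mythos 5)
Your proposal is correct in outline, but it takes a genuinely different route from the paper's proof. The paper computes the limit directly: it writes ${\rm Var}(\iota(\X;\Y)|\X=\x)$ as ${\rm Var}_\Z\bigl(\tfrac{\|\Z\|^2}{2}+\log g(\x+\sigma_N\Z)\bigr)$, factors out the Gaussian part to isolate $\tilde g$, justifies exchanging $\lim_{\sigma_N\to\infty}$ with $\E_\X[{\rm Var}_\Z(\cdot)]$ via a dominated-convergence lemma (Lemma~\ref{lem:Exchange}), and then pins down the pointwise limit $\sigma_N\log\tilde g(\x+\sigma_N\z)\to-\x^{\mathsf{T}}\z$ with matching Jensen and cumulant bounds, so the answer falls out as $\E_\X[{\rm Var}_\Z(-\X^{\mathsf{T}}\Z)]=\E[\|\X\|^2]$. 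You instead squeeze: the $\limsup$ direction comes for free from the bound~\eqref{eq:Our_bound_gaussian_noise} itself together with the classical high-noise MMSE limit, and the $\liminf$ from a single linear test function via ${\rm Var}(U)\ge \Cov(U,V)^2/{\rm Var}(V)$, Stein's identity, and the Gaussian case of Tweedie's formula~\eqref{eq:TRE_Idenitty} — an elegant duality, since the linear function is precisely the extremal function of the Gaussian Poincar\'e inequality that produced the bound, which also explains conceptually why the bound is tight (the information density becomes asymptotically linear in $\y$). What your route buys: no need to compute the exact limit of $\sigma_N\log\tilde g$, only one-sided Fatou for the outer average, and reuse of already-established machinery (Theorem~\ref{thm:MMSE_new_rep}, Theorem~\ref{thrm:MMSE_lower_bound}). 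What it costs: dependence on the external result $\lim_{\sigma_N\to\infty}\mmse(\X|\Y)={\rm Var}(\X)$, which the paper's self-contained argument avoids, plus two regularity steps you only sketch — the integrability underlying Stein's identity, and, more seriously, the inner claim $\E_{\N}\bigl[\E[\X|\Y=\x+\N]\bigr]\to\E[\X]$. That last step is the one place where "a dominated-convergence argument" hides real work: writing $\y=\x+\sigma_N\z$, the natural dominating function for the posterior mean involves $\E[\|\X\|\,{\rm e}^{\X^{\mathsf{T}}\z/\sigma_N}]\le C\,{\rm e}^{c\sigma_0\|\z\|^2/\sigma_N^2}$, and taking the supremum over all $\sigma_N>1$ gives ${\rm e}^{c\sigma_0\|\z\|^2}$, which need not be integrable against the standard Gaussian; the fix is to dominate only for $\sigma_N^2$ above a fixed threshold (say $\sigma_N^2\ge 4c\sigma_0$), which suffices for the limit and yields a Gaussian-integrable dominator — exactly the same exploitation of the $1/\sigma_N$ damping that powers the paper's proof of Lemma~\ref{lem:Exchange}. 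With those details filled in, your squeeze argument is a valid alternative proof.
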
 
\begin{proof} 
To simplify the proof, without loss of generality, we assume that $\E[\X]={\bf 0}_k$.   In addition, since we are looking at $\sigma_N \to \infty$, we assume that $\sigma_N>1$ when we derive our inequalities. 
Now, let $g(\y)= (2 \pi  \sigma_N^2)^{ \frac{k}{2}} f_{\Y}(\y)$ and note that
\begin{align}
\label{eq:VarInfoDens}
&{\rm Var} ( \iota(\X;\Y) |\X=\x) \notag\\
&= {\rm Var} \left( \left. -\frac{ \|\Y-\X\|^2}{2 \sigma^2_N}- \log f_{\Y}(\Y) \right |\X=\x \right) \notag \\
&= {\rm Var} \left( \frac{\|\Z\|^2}{2 }+ \log  g(\x+\sigma_N \Z)   \right).
\end{align} 
Next, observe that 
\begin{align}
g(\x+\sigma_N \z) & = (2 \pi  \sigma_N^2)^{ \frac{k}{2}} f_{\Y}(\x+\sigma_N \z) \notag
\\& = (2 \pi  \sigma_N^2)^{ \frac{k}{2}} \E \left [ f_{\Y | \X} \left( \x+\sigma_N \z | \X \right ) \right ] \notag 
\\
&= \underbrace{\E\left [{\rm{e}}^{-\frac{ \| \x-\X\|^2  +2(\x-\X)^{\mathsf{T}} \sigma_N \z  }{2 \sigma_N^2} } \right]}_{ \tilde{g}(\x+\sigma_N \z)} {\rm{e}}^{-\frac{  \|\z\|^2}{2} }. 
\label{eq:Var_In_terms_of_stn}
\end{align} 
Therefore, combining~\eqref{eq:VarInfoDens} and~\eqref{eq:Var_In_terms_of_stn} we arrive at 
 \begin{align}
 &\lim_{\sigma_N \to \infty}  \sigma_N^2  \E \left[  {\rm Var} ( \iota(\X;\Y) |\X) \right] \notag \\
& =\lim_{\sigma_N \to \infty}  \sigma_N^2  \E_\X \left[  {\rm Var}_\Z \left( \frac{\|\Z\|^2}{2 }+ \log  g(\X+\sigma_N \Z)   \right) \right] \notag \\
&=\lim_{\sigma_N \to \infty}    \E_\X \left[  {\rm Var}_\Z \left( \frac{ \sigma_N \|\Z\|^2}{2 }+ \sigma_N \log  g(\X+\sigma_N \Z)   \right) \right] \notag\\
&= \lim_{\sigma_N \to \infty}   \E_\X \left[  {\rm Var}_\Z \left(  \sigma_N   \log \left( \tilde{g}(\X+\sigma_N \Z)  \right)   \right) \right]. 
\label{eq:SimplLimit}
\end{align} 
We now leverage the following lemma, the proof of which is provided in Appendix~\ref{sec:ProofLemmaExch}.
\begin{lem}
\label{lem:Exchange}
Assume that $\X$ is sub-Gaussian. Then, 
\begin{align*}
 &\lim_{\sigma_N \to \infty}   \E_\X \left[  {\rm Var}_\Z \left(  \sigma_N   \log \left( \tilde{g}(\X+\sigma_N \Z)  \right)   \right) \right] \notag\\
 &=  \E_\X \left[  {\rm Var}_\Z \left(  \lim_{\sigma_N \to \infty}  \sigma_N   \log \left( \tilde{g}(\X+\sigma_N \Z)  \right)   \right) \right].
\end{align*}
\end{lem}
We now focus on studying 
\begin{align*}
&\lim_{\sigma_N \to \infty}  \sigma_N   \log \left( \tilde{g}(\x+\sigma_N \z)  \right) \notag\\
 &= \lim_{\sigma_N \to \infty}  \sigma_N \log  \E \left [{\rm{e}}^{-\frac{ \|\x-\X\|^2  +2(\x-\X)^{\mathsf{T}} \sigma_N \z  }{2 \sigma_N^2}  } \right] \\
&=-\x^{\mathsf{T}}\z+ \lim_{\sigma_N \to \infty}  \sigma_N \log  \E \left [{\rm{e}}^{-\frac{ \|\x-\X\|^2  -2\X^{\mathsf{T}}\sigma_N \z  }{2 \sigma_N^2}  } \right],
\end{align*}
and we derive matching lower and upper bounds. For the lower bound, we have that
\begin{align}
&\lim_{\sigma_N \to \infty}   \sigma_N   \log \left( \tilde{g}(\x+\sigma_N \z) \right) \notag\\
& 
\stackrel{{\rm{(a)}}}{\ge} -\x^{\mathsf{T}}\z+  \lim_{\sigma_N \to \infty}  \sigma_N \log {\rm{e}}^{-\frac{   \E \left [ \|\x-\X\|^2  -2\X^{\mathsf{T}}\sigma_N \z  \right]  }{2 \sigma_N^2}  } \notag   \\
& \stackrel{{\rm{(b)}}}{=} -\x^{\mathsf{T}} \z+  \lim_{\sigma_N \to \infty}  \sigma_N \log {\rm{e}}^{-\frac{   \E \left [ \|\x-\X\|^2    \right]  }{2 \sigma_N^2}  }  \notag \\
& \ge -\x^{\mathsf{T}} \z,\label{eq:LowerBoundOnLImit} 
\end{align} 
where the inequality in $\rm{(a)}$ follows by Jensen's inequality, and the equality in $\rm{(b)}$ is due to the assumption that $\E[\X]={\bf 0}_k$.  
We now show a matching upper bound.  We have that
\begin{align}
&\lim_{\sigma_N \to \infty}   \sigma_N   \log \left( \tilde{g}(\x+\sigma_N \z) \right) \notag\\
& \stackrel{{\rm{(a)}}}{\le} - \x^{\mathsf{T}} \z+ \lim_{\sigma_N \to \infty}  \sigma_N \log  \E \left [{\rm{e}}^{ \frac{  \X^{\mathsf{T}}\z  }{ \sigma_N}  } \right]  \notag \\
& = - \x^{\mathsf{T}} \z+ \lim_{t \to 0 } \frac{ \log  \E \left [{\rm{e}}^{ t  \X^{\mathsf{T}}\z}  \right]}{t} \notag  \\
& = - \x^{\mathsf{T}} \z+  \frac{{\rm{d}}}{{\rm{d}}t} \left. \log  \E \left [{\rm{e}}^{ t  \X^{\mathsf{T}}\z}  \right] \right |_{t=0} \notag \\
& \stackrel{{\rm{(b)}}}{=} - \x^{\mathsf{T}} \z+  \E[\X^{\mathsf{T}}\z] \notag    \\
&\stackrel{{\rm{(c)}}}{=} - \x^{\mathsf{T}} \z,  \label{eq:using_zero_mean_assumption_2}
\end{align} 
where the labeled inequalities/equalities follow from: $\rm{(a)}$ the  bound  ${\rm{e}}^{-\frac{ \| \x-\X\|^2    }{2 \sigma_N^2}  }  \le 1$; $\rm{(b)}$ the definition of first cumulant, which is equal to the first moment of $\X^{\mathsf{T}}\z$; and $\rm{(c)}$ the assumption $\E[\X]={\bf 0}_k$. 
Consequently, in view of~\eqref{eq:LowerBoundOnLImit} and~\eqref{eq:using_zero_mean_assumption_2},  we obtain 
\begin{equation}
\lim_{\sigma_N \to \infty}   \sigma_N   \log \left( \tilde{g}(\x+\sigma_N \z) \right)  =- \x^{\mathsf{T}} \z. \label{eq:Limit_finalized} 
\end{equation}  
Combining~\eqref{eq:SimplLimit}, Lemma~\ref{lem:Exchange} and~\eqref{eq:Limit_finalized}, we arrive at
\begin{align*}
&\lim_{\sigma_N \to \infty}  \sigma_N^2  \E \left[  {\rm Var} ( \iota(\X;\Y) |\X) \right]\notag\\
&=  \E_\X \left[  {\rm Var}_\Z \left(  -\X^{\mathsf{T}}\Z     \right) \right] =\E\left[  \| \X\|^2  \right],
\end{align*} 
which concludes the proof of Theorem~\ref{thm:HighNoise}.
\end{proof} 	
	\subsection{Numerical Evaluations}
	The result in Theorem~\ref{thm:HighNoise} shows that our lower bound in Theorem~\ref{thrm:MMSE_lower_bound} is tight in the high-noise regime for a large family of prior distributions on $\X$. However, we suspect that such a tightness result holds more generally. To support this fact, we here provide numerical evaluations for three `toy', yet practically relevant, scenarios.
\begin{enumerate}
\item {\em{Gaussian Input.}} We assume that $\mathbf{X} \sim \mathcal{N}(\mathbf{0}_k, \boldsymbol{\mathsf{\Sigma}}_{\X})$, i.e., $\X$ is a Gaussian random vector. For this scenario, the MMSE is obtained as~\cite{kay1993fundamentals},
\begin{align}
\label{eq:MMSEGaussian}
\mmse(\X|\Y) = {\sf{Tr}} \left [\boldsymbol{\mathsf{\Sigma}}_{\X} \left (\I_k + \frac{1}{\sigma^2_N} \boldsymbol{\mathsf{\Sigma}}_{\X} \right )^{-1}\right ],
\end{align}
and the Cram{\'e}r-Rao lower bound evaluates to~\cite{bell1968detection}
\begin{align}
\label{eq:CRGaussian}
\mmse(\X|\Y) \geq \frac{k^2}{\frac{k}{\sigma^2_N} + {\sf{Tr}}[\boldsymbol{\mathsf{\Sigma}}_{\X}^{-1}]}.
\end{align}
In Fig.~\ref{fig:MMSEGaussian}, we plot the MMSE in~\eqref{eq:MMSEGaussian} (solid line), the Cram{\'e}r-Rao bound in~\eqref{eq:CRGaussian} (dotted line), and our bound on the MMSE in~\eqref{eq:Our_bound_gaussian_noise} (dashed line) versus different values of $\sigma^2_N$ for $k=6$ and a randomly generated $\boldsymbol{\mathsf{\Sigma}}_{\X}$, given by
\begin{align}
	\boldsymbol{\mathsf{\Sigma}}_{\X} = \scriptsize{\begin{bmatrix}
			5.88 & -5.10 & 0.72 & -3.49 & 4.06 & 1.08 \\
			-5.10 & 9.53 & 3.10 & 3.94 & -3.68 & -2.11 \\
			0.72 & 3.09 & 9.24 & -2.28 & -0.59 & 1.94 \\
			-3.49 & 3.94 & -2.28 & 4.49 & -1.38 & -1.42 \\
			4.06 & -3.68 & -0.59 & -1.38 & 13.23 & 1.99 \\
			1.08 & -2.11 & 1.94 & -1.42 & 1.99 & 2.06 
	\end{bmatrix}}. \label{eq:used_sigma}
\end{align}

\begin{figure}
\begin{center}
\input{Gaussian-Gaussian_vector_dim_6x6_V1}
\end{center}
\vspace{-2mm}
\caption{$\mathbf{X} \sim \mathcal{N}(\mathbf{0}_6, \boldsymbol{\mathsf{\Sigma}}_{\X})$ where $\boldsymbol{\mathsf{\Sigma}}_{\X}$ is given in \eqref{eq:used_sigma}. }
\label{fig:MMSEGaussian}
\vspace{-0.5cm}
\end{figure}
\item {\em{BPSK Input.}} We let $k=1$ and assume that $X \in \{-1,1\}$ with equal probability, i.e., $X$ is a Binary Phase Shift Keying (BPSK) signal with $P_X(1) \!=\!P_X(-1)\!=\!1/2$. For this scenario, the MMSE is obtained as~\cite{I-MMSE},
\begin{equation}
\label{eq:MMSEBPSK}
\mmse(X|Y) = 1 - \int_{-\infty}^{\infty} \frac{{\rm{e}}^{-\frac{y^2}{2}}}{\sqrt{2 \pi}} {\rm{tanh}} \left(\frac{1}{\sigma_N^2}- \frac{y}{\sigma_N} \right ) \ {\rm{d}}y.
\end{equation}
In Fig.~\ref{fig:MMSEBPSK}, we plot the MMSE in~\eqref{eq:MMSEBPSK} (solid line) and our lower bound in~\eqref{eq:Our_bound_gaussian_noise} (dashed line) versus $\sigma^2_N$. 
\begin{figure}
\center
\input{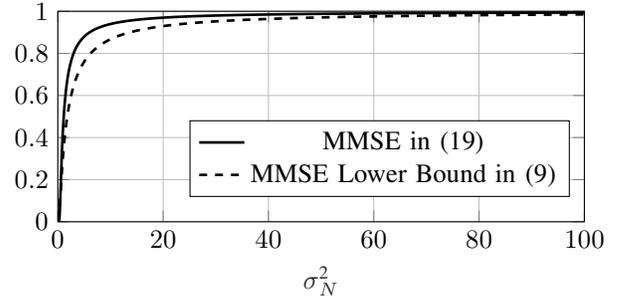}
\vspace{-2mm}
\caption{$X \sim P_X(x) = 1/2$ for $x \in \{-1,1\}$.}
\label{fig:MMSEBPSK}
\vspace{-0.5cm}
\end{figure}
\item {\em Sparse Input.} We let $k=1$ and assume that $X \sim P_X = (1-\alpha) \delta_0 +\alpha \mathcal{N}(0,1)$, where $\alpha \in [0,1]$ and $\delta_0$ is the point measure at $0$. Such input distributions are used to model sparsity and have been studied in~\cite{wu2012optimal,donoho2010counting,donoho2009message,guo2009single}. To the best of our knowledge, a closed-form expression for the MMSE is not known. In Fig.~\ref{fig:MMSESparse}, we plot the MMSE (solid line) and our lower bound on the MMSE in~\eqref{eq:Our_bound_gaussian_noise} (dashed line) versus different values of $\sigma^2_N$ for $\alpha=0.4$. In particular, the two curves were obtained via a Monte Carlo simulation with $5 \cdot 10^5$ iterations.
\begin{figure}
\center
%
%
\begin{tikzpicture}

\begin{axis}[%
width=7cm,
height=2.8cm,
at={(1.011in,0.642in)},
scale only axis,
unbounded coords=jump,
xmin=0,
xmax=10,
xlabel style={font=\color{white!15!black}},
xlabel={$\sigma{}_\text{n}$},
ymin=0,
ymax=0.4,
axis background/.style={fill=white},
xmajorgrids,
ymajorgrids,
legend style={at={(0.25,0.3)},anchor=west}
]
\addplot [color=black, solid, line width=1.0pt]
  table[row sep=crcr]{%
0	nan\\
0.101010101010101	nan\\
0.202020202020202	0.0148873336674815\\
0.303030303030303	0.0387859655137523\\
0.404040404040404	0.0712518741693901\\
0.505050505050505	0.108558734593746\\
0.606060606060606	0.14249463768473\\
0.707070707070707	0.171364232633837\\
0.808080808080808	0.199809812279304\\
0.909090909090909	0.226367715634\\
1.01010101010101	0.250586285686264\\
1.11111111111111	0.271605529637354\\
1.21212121212121	0.289489465272531\\
1.31313131313131	0.304588860846198\\
1.41414141414141	0.317198240955074\\
1.51515151515152	0.32781310395156\\
1.61616161616162	0.336512769514325\\
1.71717171717172	0.34369591700293\\
1.81818181818182	0.349713784776028\\
1.91919191919192	0.355020120615051\\
2.02020202020202	0.359393837398192\\
2.12121212121212	0.363386875840358\\
2.22222222222222	0.366929655056336\\
2.32323232323232	0.369944121641853\\
2.42424242424242	0.372600032494675\\
2.52525252525253	0.374454110839564\\
2.62626262626263	0.376412406701493\\
2.72727272727273	0.378334784507818\\
2.82828282828283	0.379841729653619\\
2.92929292929293	0.381060140581373\\
3.03030303030303	0.382100297174069\\
3.13131313131313	0.383205030790788\\
3.23232323232323	0.384197570668827\\
3.33333333333333	0.385160493902563\\
3.43434343434343	0.38585900856151\\
3.53535353535354	0.386996717196094\\
3.63636363636364	0.387824738764759\\
3.73737373737374	0.388348738555753\\
3.83838383838384	0.389029429769638\\
3.93939393939394	0.389732903822332\\
4.04040404040404	0.390364785786575\\
4.14141414141414	0.390665795366411\\
4.24242424242424	0.391012618039951\\
4.34343434343434	0.391339006635532\\
4.44444444444444	0.391666438175603\\
4.54545454545455	0.391984423730299\\
4.64646464646465	0.392298242759855\\
4.74747474747475	0.392657346084908\\
4.84848484848485	0.393019326016523\\
4.94949494949495	0.39334472713912\\
5.05050505050505	0.393641224016514\\
5.15151515151515	0.393901222338438\\
5.25252525252525	0.39414575084004\\
5.35353535353535	0.394386845304302\\
5.45454545454545	0.394625415632449\\
5.55555555555556	0.394855205720689\\
5.65656565656566	0.395055673711196\\
5.75757575757576	0.395206267114982\\
5.85858585858586	0.395361626502214\\
5.95959595959596	0.395501763659227\\
6.06060606060606	0.395626583655196\\
6.16161616161616	0.39577180740692\\
6.26262626262626	0.395881907830025\\
6.36363636363636	0.395966495230067\\
6.46464646464646	0.396063590510058\\
6.56565656565657	0.39615329498136\\
6.66666666666667	0.396250113763422\\
6.76767676767677	0.396354977792136\\
6.86868686868687	0.396444659887858\\
6.96969696969697	0.396530511474099\\
7.07070707070707	0.39661516300543\\
7.17171717171717	0.39669675486273\\
7.27272727272727	0.396766964501288\\
7.37373737373737	0.396852205505985\\
7.47474747474747	0.396939706230592\\
7.57575757575758	0.3970238944138\\
7.67676767676768	0.397129784754299\\
7.77777777777778	0.397223897636456\\
7.87878787878788	0.397339491065032\\
7.97979797979798	0.397450770305391\\
8.08080808080808	0.397561857150826\\
8.18181818181818	0.397661424790769\\
8.28282828282828	0.397761579412551\\
8.38383838383838	0.397842992635906\\
8.48484848484848	0.397870485452156\\
8.58585858585859	0.397892033323346\\
8.68686868686869	0.397881040117227\\
8.78787878787879	0.397911719228765\\
8.88888888888889	0.397921465640706\\
8.98989898989899	0.397912577614551\\
9.09090909090909	0.397902165886935\\
9.19191919191919	0.397844769952155\\
9.29292929292929	0.397794382039948\\
9.39393939393939	0.397731375772248\\
9.49494949494949	0.397674178485271\\
9.5959595959596	0.397589835752142\\
9.6969696969697	0.397577850905006\\
9.7979797979798	0.397527120910584\\
9.8989898989899	0.397458116087707\\
10	0.397401922230261\\
};
\addlegendentry{MMSE}

\addplot [color=black, dashed, line width=1.0pt]
  table[row sep=crcr]{%
0	nan\\
0.101010101010101	nan\\
0.202020202020202	0.00808424319660437\\
0.303030303030303	0.0221810160295809\\
0.404040404040404	0.0435582061021579\\
0.505050505050505	0.070715472827544\\
0.606060606060606	0.107670847416019\\
0.707070707070707	0.132569117427687\\
0.808080808080808	0.158383969555143\\
0.909090909090909	0.183948898846786\\
1.01010101010101	0.2082648429707\\
1.11111111111111	0.230766634018767\\
1.21212121212121	0.251136849321931\\
1.31313131313131	0.269154840741179\\
1.41414141414141	0.284746765488541\\
1.51515151515152	0.298225121100293\\
1.61616161616162	0.309710031677261\\
1.71717171717172	0.319422277872612\\
1.81818181818182	0.328031428142191\\
1.91919191919192	0.33513947234794\\
2.02020202020202	0.341372543407021\\
2.12121212121212	0.346816270954249\\
2.22222222222222	0.35156581734397\\
2.32323232323232	0.355485266389381\\
2.42424242424242	0.359149881228371\\
2.52525252525253	0.36193012124192\\
2.62626262626263	0.364723825832061\\
2.72727272727273	0.367398487537173\\
2.82828282828283	0.369372925991369\\
2.92929292929293	0.37158075299314\\
3.03030303030303	0.373491139440053\\
3.13131313131313	0.375380069726593\\
3.23232323232323	0.376949860873802\\
3.33333333333333	0.378524148008664\\
3.43434343434343	0.379846112290684\\
3.53535353535354	0.381222844019564\\
3.63636363636364	0.382383323608122\\
3.73737373737374	0.38345154972158\\
3.83838383838384	0.384417362102922\\
3.93939393939394	0.385166287495802\\
4.04040404040404	0.385886550997958\\
4.14141414141414	0.386333323314022\\
4.24242424242424	0.386793189391839\\
4.34343434343434	0.387240439499682\\
4.44444444444444	0.387703266189477\\
4.54545454545455	0.388153524429851\\
4.64646464646465	0.388609643871247\\
4.74747474747475	0.389070112124543\\
4.84848484848485	0.389529849992852\\
4.94949494949495	0.389951154325423\\
5.05050505050505	0.390370198277898\\
5.15151515151515	0.39076120098935\\
5.25252525252525	0.391103089409058\\
5.35353535353535	0.391426804464183\\
5.45454545454545	0.391734903285093\\
5.55555555555556	0.392067114088447\\
5.65656565656566	0.392375165753665\\
5.75757575757576	0.39266086600318\\
5.85858585858586	0.392937813628222\\
5.95959595959596	0.393213726565129\\
6.06060606060606	0.393510037060477\\
6.16161616161616	0.393795728233291\\
6.26262626262626	0.394064520844149\\
6.36363636363636	0.39433628517251\\
6.46464646464646	0.394573258815852\\
6.56565656565657	0.394772751536701\\
6.66666666666667	0.394942959381448\\
6.76767676767677	0.395112970144884\\
6.86868686868687	0.395269530806534\\
6.96969696969697	0.395408938170509\\
7.07070707070707	0.395532464089322\\
7.17171717171717	0.395639974878607\\
7.27272727272727	0.395740048149046\\
7.37373737373737	0.395823384512612\\
7.47474747474747	0.395877074383273\\
7.57575757575758	0.395928945916407\\
7.67676767676768	0.396009417917407\\
7.77777777777778	0.396093830384249\\
7.87878787878788	0.396173840982143\\
7.97979797979798	0.396234784959185\\
8.08080808080808	0.396259539975513\\
8.18181818181818	0.396293630069106\\
8.28282828282828	0.396352322706043\\
8.38383838383838	0.396417189674765\\
8.48484848484848	0.396485648581252\\
8.58585858585859	0.396531638860151\\
8.68686868686869	0.396580553321885\\
8.78787878787879	0.396620165653838\\
8.88888888888889	0.39665744874393\\
8.98989898989899	0.396720302518165\\
9.09090909090909	0.39678472836894\\
9.19191919191919	0.396808777666541\\
9.29292929292929	0.39680699349399\\
9.39393939393939	0.396790028001565\\
9.49494949494949	0.396795939441827\\
9.5959595959596	0.396851774420286\\
9.6969696969697	0.396856518056789\\
9.7979797979798	0.396848977658955\\
9.8989898989899	0.396836594922284\\
10	0.396796729281523\\
};
\addlegendentry{MMSE Lower Bound in~\eqref{eq:Our_bound_gaussian_noise}}

\end{axis}

\begin{axis}[%
width=7cm,
height=2.8cm,
at={(1.011in,0.642in)},
scale only axis,
unbounded coords=jump,
xmin=0,
xmax=1,
ymin=0,
ymax=1,
axis line style={draw=none},
ticks=none,
axis x line*=bottom,
axis y line*=left
]
\end{axis}
\end{tikzpicture}%
\vspace{-2mm}
\caption{$X \sim P_X = (1-\alpha) \delta_0 +\alpha \mathcal{N}(0,1), \ \alpha=0.4$.}
\label{fig:MMSESparse}
\vspace{-0.5cm}
\end{figure}
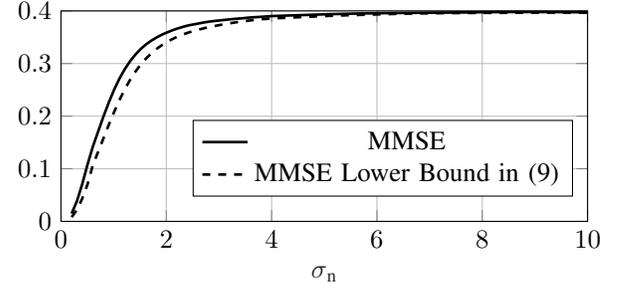
\end{enumerate}	
From Fig.~\ref{fig:MMSEGaussian}, Fig.~\ref{fig:MMSEBPSK}, and Fig.~\ref{fig:MMSESparse}, we observe that our lower bound in~\eqref{eq:Our_bound_gaussian_noise} well approximates the MMSE even in the finite noise regime. Moreover, for the scenario of a Gaussian input in Fig.~\ref{fig:MMSEGaussian}, our lower bound in~\eqref{eq:Our_bound_gaussian_noise} remarkably outperforms the well-known Cram{\'e}r-Rao bound. Finally, we point out that for the scenarios of a BPSK input (Fig.~\ref{fig:MMSEBPSK}) and a sparse input (Fig.~\ref{fig:MMSESparse}), i.e., where $\X$ has a discrete or a mixed distribution, commonly used lower bounds (e.g., Cram{\'e}r-Rao) do not hold, whereas ours does. These examples suggest that our lower bound in~\eqref{eq:Our_bound_gaussian_noise} might indeed be tight (or offer a performance guarantee) even in the finite noise regime. Hence, it would also be interesting to characterize the behavior of our lower bound in the low-noise regime. However, in the low-noise regime, the MMSE has an intricate behavior, for example, it depends on whether the distribution of $\X$ is discrete or continuous~\cite{mmseDim}; we leave this direction for future work.

	\section{Proof of Theorem~\ref{thm:MMSE_new_rep} and its Applicability}
	\label{sec:NewReprMMSE}
	In this section, we prove Theorem~\ref{thm:MMSE_new_rep}, which provides a new representation of the  MMSE. 
Towards this end, we leverage the following proposition, which provides an expression for the gradient of the information density.
		\begin{prop} \label{prop:Error_Info_density} 
			For $\x \in \mathcal{X},\y \in \mathcal{Y}$, we have that
			\begin{equation*}
				\nabla_{\y}\iota_{P_{\X\Y}}(\x;\y) = \textsf{\J}_\y \T(\y)(\x - \E[\X|\Y=\y]). 
			\end{equation*}
				\end{prop}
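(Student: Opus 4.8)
The plan is to expand the information density into a difference of two log-densities, differentiate each piece with respect to $\y$, and use Tweedie's formula \eqref{eq:TRE_Idenitty} to eliminate the (unknown) marginal. First I would make the information density explicit. By the observation following \eqref{eq:InfoDens}, the relevant Radon--Nikodym derivative is $\frac{f_{\Y|\X}(\y|\x)}{f_{\Y}(\y)}$, so that $\iota_{P_{\X\Y}}(\x;\y)=\log f_{\Y|\X}(\y|\x)-\log f_{\Y}(\y)$. Taking gradients, the claim reduces to computing $\nabla_{\y}\log f_{\Y|\X}(\y|\x)$ and $\nabla_{\y}\log f_{\Y}(\y)$ separately and subtracting.

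For the conditional term I would substitute the exponential-family form \eqref{eq:PDF_Exponential}, which gives $\log f_{\Y|\X}(\y|\x)=\log h(\y)+\langle \x,\T(\y)\rangle-\phi(\x)$; the $\phi(\x)$ contribution is constant in $\y$ and drops out. The key step is the gradient of the inner product: writing $\langle \x,\T(\y)\rangle=\sum_{i=1}^{d} x_i T_i(\y)$ and differentiating componentwise yields $\nabla_{\y}\langle \x,\T(\y)\rangle=\sum_{i=1}^{d} x_i \nabla_{\y}T_i(\y)=\textsf{\J}_{\y}\T(\y)\,\x$, where the last equality is precisely the Jacobian convention of \eqref{eq:Def_Jacobian}, whose columns are the gradients $\nabla_{\y}T_i(\y)$. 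Hence $\nabla_{\y}\log f_{\Y|\X}(\y|\x)=\nabla_{\y}\log h(\y)+\textsf{\J}_{\y}\T(\y)\,\x$.

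For the marginal term I would invoke Tweedie's formula \eqref{eq:TRE_Idenitty} in the form $\textsf{\J}_{\y}\T(\y)\E[\X|\Y=\y]=\nabla_{\y}\log\frac{f_{\Y}(\y)}{h(\y)}=\nabla_{\y}\log f_{\Y}(\y)-\nabla_{\y}\log h(\y)$, which rearranges to $\nabla_{\y}\log f_{\Y}(\y)=\nabla_{\y}\log h(\y)+\textsf{\J}_{\y}\T(\y)\E[\X|\Y=\y]$. Subtracting this from the conditional gradient, the common $\nabla_{\y}\log h(\y)$ term cancels and I obtain $\nabla_{\y}\iota_{P_{\X\Y}}(\x;\y)=\textsf{\J}_{\y}\T(\y)\bigl(\x-\E[\X|\Y=\y]\bigr)$, which is the assertion.

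The argument is essentially mechanical, so I do not expect a genuine obstacle; the only point requiring care is the bookkeeping of the Jacobian convention. One must check that, with the column-gradient definition in \eqref{eq:Def_Jacobian}, the matrix $\textsf{\J}_{\y}\T(\y)\in\mathbb{R}^{k\times d}$ acting on $\x\in\mathbb{R}^{d}$ indeed reproduces $\nabla_{\y}\langle \x,\T(\y)\rangle$ and carries the correct dimension (an element of $\mathbb{R}^{k}$, matching $\nabla_{\y}$). A secondary, implicit regularity requirement — differentiation under the integral sign for the marginal — is already subsumed by our being allowed to assume the validity of \eqref{eq:TRE_Idenitty}.
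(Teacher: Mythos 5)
Your proposal is correct and follows essentially the same route as the paper's proof: expand $\iota_{P_{\X\Y}}(\x;\y)=\log f_{\Y|\X}(\y|\x)-\log f_{\Y}(\y)$ via the exponential-family form, identify $\nabla_{\y}\langle \x,\T(\y)\rangle=\textsf{\J}_{\y}\T(\y)\,\x$, and invoke Tweedie's formula~\eqref{eq:TRE_Idenitty} to eliminate the marginal. Your rearrangement of Tweedie's formula so that the $\nabla_{\y}\log h(\y)$ terms cancel explicitly is only a cosmetic reorganization of the paper's steps (b) and (c), and your dimension check on the Jacobian convention is a sound, if minor, addition.
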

			\begin{IEEEproof}
			Fix some $\x \in \mathcal{X}$. Then,
				\begin{align*}
				&\nabla_{\y}\iota_{P_{\X\Y}}(\x;\y)  \notag \\  
					&\stackrel{{\rm{(a)}}}{=} \nabla_{\y}\log h(\y) + \nabla_{\y}\langle \x,\T(\y) \rangle - \nabla_{\y}\log f_{\Y}(\y) \notag  \\
					&\stackrel{{\rm{(b)}}}{=}\textsf{\J}_\y\T(\y)\x - \nabla_{\y}\log \frac{f_{\Y}(\y)}{h(\y)} \notag \\
					&\stackrel{{\rm{(c)}}}{=} \textsf{\J}_\y\T(\y)\x - \textsf{\J}_\y\T(\y)\E[\X|\Y=\y] \\
					&= \textsf{\J}_\y\T(\y)(\x-\E[\X|\Y=\y]), 
					\end{align*}
where the labeled equalities follow from: $\rm{(a)}$ the definition of the exponential family in~\eqref{eq:PDF_Exponential};
$\rm{(b)}$ using the fact that $\nabla_{\y}\langle \x,\T(\y) \rangle= \textsf{\J}_\y\T(\y)\x$;	
and $\rm{(c)}$ using 	the  identity in~\eqref{eq:TRE_Idenitty}. 				This concludes the proof of Proposition~\ref{prop:Error_Info_density}. 	
		\end{IEEEproof}

\smallskip
\noindent 
{\em{Proof of Theorem~\ref{thm:MMSE_new_rep}.}}
With Proposition~\ref{prop:Error_Info_density},  we are now ready to prove Theorem~\ref{thm:MMSE_new_rep}. The proof goes as follows.
		Since $ \textsf{\J}_\Y \T(\Y)$ has full rank a.s. $\Y$, then the pseudo inverse $ {\left( \textsf{\J}_\Y \T(\Y) \right)}^{+}$ exists a.s. $\Y$. 
		Using Proposition~\ref{prop:Error_Info_density}, we have that  a.s. $\Y$,	
		\begin{equation}
				 \left(\X - \E[\X|\Y] \right)={\left( \textsf{\J}_\Y \T(\Y) \right)}^{+} \nabla_{\Y}\iota_{P_{\X\Y}}(\X;\Y). \label{eq:Step_in_proo_mmse}
			\end{equation}
		Now, taking the norm squared and the expectation of both sides of \eqref{eq:Step_in_proo_mmse}, and recalling that $\mmse(\X|\Y) = \E \left [||\X \!-\! \E[\X|\Y]||^2 \right]$ we arrive at the desired result. This concludes the  proof of Theorem~\ref{thm:MMSE_new_rep}.

\subsection{Example: Univariate Normal with Unknown Variance with Gamma Prior}
To show an application of the new representation of the MMSE in Theorem~\ref{thm:MMSE_new_rep}, we here consider the following model,
	\begin{align}
		\label{eq:ChModel}
		Y = \frac{Z}{\sqrt{2X}},
	\end{align}
	where $Z$ is the standard normal random variable, i.e., $Z \sim \mathcal{N}(0,1)$ and $X$ is the unknown variance drawn from 
the gamma distribution with $\alpha>0$ shape, and $\beta > 0$ rate, i.e., 
\begin{align}
\label{eq:GammaDistr}
	f_{X}(x) = \frac{\beta^\alpha}{\Gamma(\alpha)}x^{\alpha-1}{\rm{e}}^{-\beta x},
\end{align}
where $\Gamma$ is the gamma function.
	
Using the channel model in~\eqref{eq:ChModel}, we therefore obtain,
	\begin{align}
		\label{eq:CondPDFUnknownVar}
		f_{Y|X} (y|x)
		= \sqrt{\frac{x}{\pi}} {\rm{e}}^{-x y^2}.
	\end{align}
The conditional pdf in~\eqref{eq:CondPDFUnknownVar} can be mapped to the exponential family in~\eqref{eq:PDF_Exponential} through the following mapping,
		\begin{align}
		\label{eq:Mapping}
			h(y)  = \sqrt{\frac{1}{\pi}},
			\
			\phi(x)  = -\log \left( \sqrt{x} \right),
			\
			T(y)  = -y^2.
		\end{align}
We now evaluate the MMSE expression in Theorem~\ref{thm:MMSE_new_rep} for our model in~\eqref{eq:ChModel} with the mapping in~\eqref{eq:Mapping}. We note that 	
\begin{align}
\label{eq:SuffStatGrad}
\left(\frac{{\rm{d}}}{{\rm{d}}y} T(y) \right )^{-1}= -\frac{1}{2y}.
\end{align}	
We now focus on deriving $\frac{{\rm{d}}}{{\rm{d}}y} \iota_{P_{XY}}(x;y)$, where
\begin{align*}
\iota_{P_{XY}}(x;y) &= \log\frac{f_{Y|X}(y|x)}{f_{Y}(y)} 
\\& = \log{(f_{Y|X}(y|x))} - \log{(f_{Y}(y))},
\end{align*}
where $f_{Y|X}$ is defined in~\eqref{eq:CondPDFUnknownVar}. 
Hence, we obtain
	\begin{align}
	\label{eq:GradLogCondPDF}
		\frac{{\rm{d}}}{{\rm{d}}y} \log{\left( f_{Y|X}(y|x) \right)} &=- \frac{{\rm{d}}}{{\rm{d}}y}  \left(  x y^2 \right )
		 = - 2xy.
	\end{align}
	To complete the evaluation of the MMSE in Theorem~\ref{thm:MMSE_new_rep}, we need to compute $\frac{{\rm{d}}}{{\rm{d}}y} \log{(f_{Y}(y))}$, where
\begin{align*}
f_Y(y) &= \int_{0}^{\infty} f_{Y|X}(y|x) f_X(x) \ {\rm{d}}x
\\ & \stackrel{{\rm{(a)}}}{=} \sqrt{\frac{1}{\pi}} \frac{\beta^{\alpha}}{\Gamma(\alpha)} \int_{0}^{\infty} {\rm{e}}^{-x y^2} x^{\alpha-\frac{1}{2}} {\rm{e}}^{-x \beta} \ {\rm{d}}x
\\& =  \sqrt{\frac{1}{\pi}} \frac{\beta^{\alpha}}{\Gamma(\alpha)} \mathcal{L} \left \{ x^{\alpha-\frac{1}{2}} {\rm{e}}^{-x \beta} \right \}(y^2)
\\& \stackrel{{\rm{(b)}}}{=}  \sqrt{\frac{1}{\pi}} \beta^{\alpha} \frac{ \Gamma \left( \alpha+\frac{1}{2}\right )}{\Gamma(\alpha)} \frac{1}{\left( y^2 + \beta \right )^{\alpha+\frac{1}{2}}},
\end{align*}	
where
$\rm{(a)}$ follows from using the expressions for $f_X$ and $f_{Y|X}$ in~\eqref{eq:GammaDistr} and~\eqref{eq:CondPDFUnknownVar}, respectively, and $\rm{(b)}$ follows from using the Laplace transform for a function with a $n$th power frequency shift.
Thus, we obtain
\begin{align}
\label{eq:GradLogPDF}
\frac{{\rm{d}}}{{\rm{d}}y} \log{(f_{Y}(y))} = \frac{1}{f_{Y}(y)} \frac{{\rm{d}}}{{\rm{d}}y} f_{Y}(y) = - \frac{y(2\alpha+1)}{y^2 + \beta}.
\end{align}
Finally, by substituting~\eqref{eq:SuffStatGrad},~\eqref{eq:GradLogCondPDF} and~\eqref{eq:GradLogPDF} inside the MMSE expression in Theorem~\ref{thm:MMSE_new_rep}, we obtain
\begin{align}
\label{eq:MMSEGamma}
\begin{split}
\mmse(X|Y) &= \mathbb{E} \left [ \left | X - \frac{\alpha + \frac{1}{2}}{Y^2 + \beta} \right |^2 \right ]
\\& = \mathbb{E} \left [  X^2 \right ] - 2 \left( \alpha\!+\!\frac{1}{2}\right ) \mathbb{E} \left [ \frac{X}{Y^2 + \beta}\right ]
\\& \quad + \left (\alpha+\frac{1}{2}\right )^2  \mathbb{E} \left [ \frac{1}{ \left( Y^2 + \beta \right )^2 } \right ].
\end{split}
\end{align}
We now compute the three expected values in~\eqref{eq:MMSEGamma}. We have
\begin{align*}
\mathbb{E}\left [  X^2 \right ] &= \text{Var}[X] + \left( \mathbb{E}[X] \right )^2 = \frac{\alpha(1+\alpha)}{\beta^2},
\\ \mathbb{E} \left [ \frac{X}{Y^2 \!+\! \beta}\right ] & = \int_{-\infty}^{\infty} \int_0^{\infty}  \frac{x}{y^2+\beta} f_{Y|X}(y|x) f_{X}(x) \ {\rm{d}}x \ {\rm{d}} y 
\\& = \!\!\sqrt{\frac{1}{\pi}} \frac{\beta^{\alpha}}{\Gamma(\alpha)}  \!\!\!\int_{-\infty}^{\infty} \!\!\frac{1}{y^2+\beta} \mathcal{L} \!\!\left \{ x^{\alpha + \frac{1}{2}} {\rm{e}}^{- \beta x} \right \} \!\!(y^2) {\rm{d}}y
\\& = \sqrt{\frac{1}{\pi}} \beta^{\alpha} \frac{\Gamma \left( \alpha+\frac{3}{2}\right )}{\Gamma(\alpha)} \int_{-\infty}^{\infty} \frac{1}{(y^2 + \beta)^{\alpha+\frac{5}{2}}} \ {\rm{d}}y
\\& = \beta^{-2} \frac{\Gamma \left( \alpha + \frac{3}{2} \right ) \Gamma \left( \alpha+2 \right )}{\Gamma (\alpha) \Gamma \left( \alpha + \frac{5}{2} \right )} = \beta^{-2} \frac{\alpha(\alpha+1)}{\alpha+\frac{3}{2}},
\\ \mathbb{E} \left [ \frac{1}{ \left( Y^2 + \beta \right )^2 } \right ] & = \int_{-\infty}^{\infty} \frac{1}{\left( y^2 + \beta\right )^2} f_Y(y) \ {\rm{d}}y
\\& = \sqrt{\frac{1}{\pi}} \beta^{\alpha} \frac{\Gamma \left( \alpha+\frac{1}{2} \right )}{ \Gamma(\alpha)} \int_{-\infty}^{\infty} \frac{1}{\left( y^2 + \beta \right )^{\alpha+\frac{5}{2}}} \ {\rm{d}}y
\\& = \beta^{-2} \frac{\Gamma \left( \alpha+\frac{1}{2} \right ) \Gamma \left( \alpha+2 \right )}{\Gamma \left (\alpha \right ) \Gamma \left ( \alpha+\frac{5}{2}\right )} 
\\& = \beta^{-2} \frac{\alpha(\alpha+1)}{\left( \alpha+\frac{1}{2}\right ) \left( \alpha+\frac{3}{2}\right)}.
\end{align*}
By substituting the above inside~\eqref{eq:MMSEGamma}, we obtain
\begin{align*}
\mmse(X|Y) &= \frac{\alpha (\alpha+1)}{\beta^2 \left( \alpha+\frac{3}{2} \right)}.
\end{align*}
We note that, in order to compute the MMSE above, we did not need to compute $\mathbb{E}[X|Y]$ (which is needed by the classical representation of the MMSE), but only the marginal pdf $f_Y$, which can be done by simple computations. Moreover, we also highlight that the MMSE above is in closed form, and this expression highlights that the MMSE only depends on the parameters of the gamma distribution.

\appendices

\section{Proof of Lemma~\ref{lem:Exchange}}
\label{sec:ProofLemmaExch}
Our objective is to show that the following two exchanges of the limit and expectation, and the limit and variance  are permissible,
\begin{align*}
 &\lim_{\sigma_N \to \infty}   \E_\X \left[  {\rm Var}_\Z \left(  \sigma_N   \log \left( \tilde{g}(\X+\sigma_N \Z)  \right)   \right) \right] \\&=  \E_\X \left[  {\rm Var}_\Z \left(  \lim_{\sigma_N \to \infty}  \sigma_N   \log \left( \tilde{g}(\X+\sigma_N \Z)  \right)   \right) \right].
\end{align*} 
Equivalently, by using the definition of the variance, we have to show that 
\begin{subequations}
\label{eq:EqVarDef}
\begin{align}
 \lim_{\sigma_N \to \infty}   \E \left[ f_{\sigma_N}(\X,\Z)  \right]  =    \E \left[   \lim_{\sigma_N \to \infty} f_{\sigma_N}(\X,\Z)  \right],  
\end{align} 
where 
\begin{align}
f_{\sigma_N}(\x,\z) =  \left(    \sigma_N \log  \tilde{g}(\x+\sigma_N \z)  -  \E[ \sigma_N \log  \tilde{g}(\X+\sigma_N \Z)] \right)^2.  
\end{align} 
\end{subequations}
Towards this end, we start by noting that
\begin{align}
&\left|    \sigma_N \log  \tilde{g}(\x+\sigma_N \z) \right|  \notag
\\& \stackrel{{\rm{(a)}}}{=}   \sigma_N \max \left \{    \log  \E \left[{\rm{e}}^{-\frac{ \|\x-\X\|^2  +2(\x-\X)^{\mathsf{T}} \sigma_N \z  }{2 \sigma_N^2} } \right], \right. \notag
\\& \left. \qquad \qquad \quad -  \log  \E \left[{\rm{e}}^{-\frac{ \|\x-\X\|^2  +2(\x-\X)^{\mathsf{T}} \sigma_N \z  }{2 \sigma_N^2} } \right] \right \}  \notag\\
& \stackrel{{\rm{(b)}}}{\le}    \sigma_N \max \left \{    \log  \E \left[{\rm{e}}^{-\frac{  2(\x-\X)^{\mathsf{T}} \sigma_N \z  }{2 \sigma_N^2} } \right], \right. \notag
\\& \left.  \qquad \qquad \quad  -  \log  {\rm{e}}^{-\frac{  \E \left[\|\x-\X\|^2  +2(\x-\X)^{\mathsf{T}} \sigma_N \z  \right] }{2 \sigma_N^2} }  \right \} \notag  \\
& \stackrel{{\rm{(c)}}}{=}    \sigma_N \max \left \{    \log  \E \left[{\rm{e}}^{-\frac{  (\x-\X)^{\mathsf{T}} \z  }{\sigma_N} } \right], -  \log  {\rm{e}}^{-\frac{  \E \left[\|\x-\X\|^2  \!+\!2 \x^{\mathsf{T}} \sigma_N \z  \right] }{2 \sigma_N^2} }  \right \} \notag \\
& =    \sigma_N \max \left \{    - \frac{\x^{\mathsf{T}} \z}{\sigma_N}  \!+\!  \log  \E \left[{\rm{e}}^{\frac{  \X^{\mathsf{T}} \z  }{ \sigma_N} } \right],\frac{\x^{\mathsf{T}} \z}{\sigma_N} \!+\! \frac{  \E \left[\|\x-\X\|^2    \right] }{2 \sigma_N^2}   \right \} \notag \\
&  \stackrel{{\rm{(d)}}}{\le} \|\x\| \| \z\| +      \sigma_N \max \left \{     \log  \E \left[{\rm{e}}^{\frac{  \X^{\mathsf{T}} \z  }{ \sigma_N} } \right],  \frac{  \E \left[\|\x-\X\|^2    \right] }{2 \sigma_N^2}   \right \}  \notag
\end{align}
\begin{align}
&  \stackrel{{\rm{(e)}}}{\le} \|\x\| \| \z\| +      \sigma_N \max \left \{  \frac{B\|\z\|^2}{\sigma_N^2} ,  \frac{ 2 \|\x\|^2 +2 \E \left[\|\X\|^2    \right] }{2 \sigma_N^2}   \right \} \notag \\
& \stackrel{{\rm{(f)}}}{\le} \|\x\| \| \z\| +       \max \left \{  B\|\z\|^2 ,    \|\x\|^2 + \E \left[\|\X\|^2    \right]   \right \},  \label{eq:Sigma_N_>1}
\end{align} 
where the labeled inequalities/equalities follow from: 
$\rm{(a)}$ using the property that $|\log(x)| = \max \{ \log(x),  -\log(x) \}$; 
$\rm{(b)}$ using the bound ${\rm{e}}^{-\frac{ \|\x-\X\|^2}{2 \sigma_N^2}} \le 1$ on the first logarithm, and using   Jensen's inequality on the second logarithm; 
$\rm{(c)}$ the assumption that $\E[\X]={\bf 0}_k$;
$\rm{(d)}$ using Cauchy-Schwarz inequality; 
$\rm{(e)}$ the assumption that $\X$ is sub-Gaussian for some constant $B>0$ and using the bound $\|\x-\X\|^2 \le 2 \|\x\|^2 +2 \|\X\|^2$; 
and $\rm{(f)}$ the assumption that $\sigma_N>1$. 

Now, we can use~\eqref{eq:Sigma_N_>1} to bound $f_{\sigma_N}(\x,\z)$ in~\eqref{eq:EqVarDef}. We obtain
\begin{align}
& f_{\sigma_N}(\x,\z) \notag
\\&=  \left(    \sigma_N \log  \tilde{g}(\x+\sigma_N \z)  -  \E[ \sigma_N \log  \tilde{g}(\X+\sigma_N \Z)] \right)^2 \notag \\
& \stackrel{{\rm{(a)}}}{\le}   2\left (   \sigma_N \log  \tilde{g}(\x+\sigma_N \z) \right )^2+ 2 \left( \E[ \sigma_N \log  \tilde{g}(\X+\sigma_N \Z)] \right )^2 \notag \\
& \stackrel{{\rm{(b)}}}{\le}   2\left(    \sigma_N \log  \tilde{g}(\x+\sigma_N \z) \right)^2+ 2 \E[ |\sigma_N \log  \tilde{g}(\X+\sigma_N \Z)|^2] \notag \\
& \stackrel{{\rm{(c)}}}{\le}   2\left( \|\x\| \| \z\| +        \max \left \{  B\|\z\|^2 ,    \|\x\|^2 + \E \left[\|\X\|^2    \right]   \right \}    \right)^2 \notag\\
& +\! \!2 \E \!\left[ \!\left( \|\X\| \| \Z\| \!+\!     \max \! \left \{  B \|\Z\|^2 ,    \|\X\|^2 \!\!+\! \E \left[\|\X\|^2    \right]   \right \} \!\right )^2\right], \label{eq:Bound_on_f}
\end{align} 
where the labeled inequalities follow from:
$\rm{(a)}$ the fact that $(a-b)^2 \leq 2 a^2 + 2b^2$;
$\rm{(b)}$ using Jensen's inequality;
and $\rm{(c)}$ using the bound in~\eqref{eq:Sigma_N_>1}.

Now, note that under the assumption that $\X$ is sub-Gaussian all moments are finite and hence, the quantity in~\eqref{eq:Bound_on_f} is integrable. Consequently, under the assumption that $\X$ is sub-Gaussian, the random variable $f_{\sigma_N}(\X,\Z)$  is bounded by an integrable random variable, and  we can apply the dominate convergence theorem to exchange the limit and the expectation  and arrive at 
 \begin{align*}
&\lim_{\sigma_N \to \infty}   \E_\X \left[  {\rm Var}_\Z \left(  \sigma_N   \log \left( \tilde{g}(\X+\sigma_N \Z)  \right)   \right) \right] 
\\&=  \E_\X \left[  {\rm Var}_\Z \left(  \lim_{\sigma_N \to \infty}   \sigma_N   \log \left( \tilde{g}(\X+\sigma_N \Z)  \right)   \right) \right],
\end{align*} 
which concludes the proof of Lemma~\ref{lem:Exchange}.

\color{black}{
	
	\bibliography{refs}  
	\bibliographystyle{IEEEtran}	
	
}
	
\end{document}